\newtheorem{theorem}{Theorem}
\newtheorem{observation}{Observation}
\newtheorem{definition}{Definition}
\newcommand{\prob}[1]{\textup{\textsc{\lowercase{#1}}}\xspace}
\newcommand{\Oh}{\mathcal{O}} 
\newcommand{\NP}{\ensuremath{\textsf{NP}}\xspace}
\newcommand{\FPT}{\ensuremath{\textsf{FPT}}\xspace}
\newcommand{\APSP}{\prob{All-Pairs Shortest Paths}}
\newcommand{\TC}{\prob{Triangle Counting}}
\DeclareMathOperator{\cw}{\mathsf{cw}}
\DeclareMathOperator{\mw}{\mathsf{mw}}
\DeclareMathOperator{\tww}{\mathsf{tww}}
\newcommand{\n}[1]{n_{#1}}
\newcommand{\m}[1]{m_{#1}}
\newcommand{\tr}{t}
\newcommand{\NumRedE}[2]{\mu_{#1,#2}}
\newcommand{\TriInGraph}[1]{T(#1)}
\newcommand{\SetOfVertex}[1]{#1(G)}
\newcommand{\GraphOfVertex}[1]{G^{#1}}
\newcommand{\ObsCase}[1]{$(#1)$}
\renewcommand\subparagraph{\@startsection{subparagraph}{5}{0pt}%
                                       {3.25ex \@plus1ex \@minus .2ex}%
                                       {-1em}%
                                      {\normalfont\normalsize\bfseries}} 
\title{On Triangle Counting Parameterized by Twin-Width}
\author[1]{Stefan Kratsch}
\author[1]{Florian Nelles}
\author[1]{Alexandre Simon}
\date{}
\affil[1]{\small Department of Computer Science, Humboldt-Universit{\"a}t zu
Berlin, Germany  

{\{kratsch, nelles, simonarm\}@informatik.hu-berlin.de}}
\begin{document}

\maketitle
\abstract{In this report we present an algorithm solving \TC in time $\Oh(d^2n+m)$, where $n$ and $m$, respectively, denote the number of vertices and edges of a graph $G$ and $d$ denotes its twin-width, a recently introduced graph parameter. We assume that a compact representation of a $d$-contraction sequence of $G$ is given.}

\section{Introduction}
\label{sec:introduction}

Recently, the parameter twin-width was introduced by Bonnet et al.~\cite{Bonnet0TW20} over a series of papers.
The $\emph{twin-width}$ of a graph $G$, denoted by $\tww(G)$, roughly measures the distance of a graph from being a cograph, and has demonstrated to have various advantages that make it stand out as an important graph parameter.
One of the main motivations to study twin-width comes from the fact that the class of bounded twin-width contains several interesting and diverse graphs as bounded boolean-width, bounded rank-width, bounded clique-width, unit interval, proper-minor closed  or also $K_t$-minor free graphs~\cite{BonnetG0TW21}. 
On the algorithmic side, FO model checking is \FPT on classes with bounded twin-width~\cite{Bonnet0TW20}. 
Moreover, various intractable problems like \prob{Independent Set}, \prob{Dominating Set}, and \prob{Clique} can be solved in time $2^{\Oh(k)}n$-time \cite{BonnetG0TW21}. 
It is also shown there that graphs of twin-width $d$ admit an interval biclique partition of size $\Oh(dn)$. Using such an edge-partition, they show how to solve \APSP in time $\Oh(n^2\log{n})$.

In this report, we show how to solve the \TC problem on graphs with $n$ vertices and $m$ edges in time $\Oh(d^2n+m)$, with $d$ denoting the twin-width of the graph. 
A graph $G$ has twin-width at most $d$ if there is a so-called $d$-contraction sequence of $G$, which is defined in Section~\ref{sec:preliminaries}. 
Deciding if the twin-width of a graph is at most $4$ is \NP-hard~\cite{BergBD21}. Thus, we assume that a $d$-contraction sequence is given together with the graph $G$.

The currently fastest (unparameterized) algorithm for \TC is due to 
Alon, Yuster, and Zwick~\cite{AlonYZ97}, that have showed that \TC can be solved in time $\Oh(n^\omega)$ using fast matrix multiplication where $2 \leq \omega < 2.372863$. For sparse graphs one can solve \TC in time $\Oh(m^{\frac{2\omega}{\omega+1}}) = \Oh(m^{1.41})$.
It is conjectured that there is no $\Oh(n^{3-\varepsilon})$ time combinatorial algorithm.
Within a parameterized framework, Coudert, Ducoffe, and Popa gave an algorithm that runs in time $\Oh(\cw^2(n+m))$ where $\cw$ denotes the $\emph{clique-width}$ of the input graph \cite{CoudertDP19} whereas Kratsch and Nelles obtained an $\Oh(\mw^{\omega-1}n+m)$ where $\mw$ is the $\emph{modular-width}$ of the input graph~\cite{KratschN18}.

\section{Preliminaries}
\label{sec:preliminaries}

All graphs considered are finite, undirected, and simple. We refer to \cite{0030488} for the basic concepts and notions of graph theory.
In particular, given a graph $G = (V,E)$, we denote by $V(G)$ its vertex set and by $E(G)$ its edge set.
For a graph $G=(V,E)$ and a subset $X \subseteq V$, we define the induced subgraph over the vertex set $X$ as $G[X] = (X , E')$ where $E' = \{\{u,v\} \in V \mid u, v \in X\}$.
We refer by $N(v)$ to the set of neighbors of a vertex $v \in V(G)$, i.e., $N(v) = \{ u \in V(G) \mid \{u,v\} \in E(G) \} $.
Given a vertex $v \in V(G)$, the \emph{degree} $d(v)$ of $v$ is the number of neighbors of $v$, i.e., $d(v) = |N(v)|$.
Furthermore, we say that two vertices $u,v \in V(G)$ are \emph{twins} if $N(v) \setminus \{u, v \} = N(u) \setminus \{u, v\}$.
For any integers $j, k \in \mathbb{N}$, we denote $[j,k] = \{j,j+1,\ldots,k \}$ and in particular $[k]=[1,k]=\{1,2,\ldots,k\}$.

\subparagraph{Twin-width.}
A \emph{trigraph} $G$ is a triple $G = (V, E, R)$ where $E$ and $R$ are two disjoint sets of edges. We refer to an edge in $E$ as a black edge and to an edge in $R$ as a red edge.
By setting $R = \emptyset$, one can interprete any graph $(V,E)$ as a trigraph $(V,E,\emptyset)$. 
A trigraph $G = (V, E, R)$ with maximum red degree $d$, i.e., maximum degree in the graph $(V, R)$, is called a $d$-trigraph.
Furthermore, for any trigraph $G=(V,E,R)$ and any vertex $v \in V$, we denote by $N_R(v)$ the set of red neighbors of $v$, i.e., $N_R(v) = \{ u \in V \mid \{u,v\} \in R \} $.

For a trigraph $G = (V,E,R)$ and two vertices $u, v \in V$, we define $G/u,v = (V', E', R')$ as the trigraph obtained from $G$ by contracting $u$ and $v$ into a new vertex $w$ and after updating the edge sets in the following way: A vertex $x$ is linked to the new vertex $w$ in $G/u,v$ by a black edge if and only if $x$ is linked to $u$ \emph{and} to $v$ in $G$ by a black edge. Moreover, $x$ is not adjacent to $w$, if $x$ is neither adjacent to $u$ nor to $v$ in $G$. In all other cases $x$ is linked to $w$ by a red edge. Formally, $V' = (V \setminus \{u,v\} \cup \{w\})$ with $\{w, x \} \in E'$ if and only if $ \{u,x\} \in E$ and $ \{v,x\} \in E$; $\{w,x\} \notin E' \cup R'$ if and only if $ \{u,x\} \notin E \cup R$ and $\{v,x\} \notin E \cup R$; and $\{w,x\} \in R'$ otherwise. All edges that are not incident to $u$ nor to $v$ remain unchanged in $G/u,v$. Notice that $u$ and $v$ do not need to be adjacent.
For any integer $d \geq 0$, if both $G$ and $G/u,v$ are $d$-trigraphs, $G/u,v$ is called a \emph{$d$-contraction}.
A trigraph $G$ is \emph{$d$-collapsible} if there exists a sequence of $d$-contractions which contracts $G$ to a single vertex.
The minimum integer $d \geq 0$ such that $G$ is $d$-collapsible is called the twin-width of $G$, denoted by $\tww(G)$.
In other words, for any graph $G$ with $\tww(G) = d$, there exists a sequence of trigraphs $G_n, G_{n-1}, \ldots, G_2, G_1$ with $G_n = G$, $G_1 = K_1$ (the clique of size $1$) and $G_k$ is a $d$-contraction of $G_{k+1}$ for $k \in [n-1]$.
To represent such a contraction sequence efficiently, it is sufficient to only specify the vertices that get contracted:
\begin{definition}[Compact representation of a $d$-sequence]\label{def:compactRepresentation}
    Let $G = G_n, G_{n-1}, \ldots, G_1 = K_1$ be a $d$-contraction sequence of an $n$-vertex graph $G = (V,E)$ with $V = \{v_1, v_2, \ldots, v_n\}$. Then, we call $(v_{i_k},v_{j_k})_{n \geq k \geq 2}$ with 
    $G_{k-1} = G_{k}/v_{i_k},v_{j_k}$ a \emph{compact representation} of a $d$-contraction sequence. The graph $G_{k-1}$ results from $G_{k}$ by contracting the two vertices $v_{i_{k}}$ and $v_{j_{k}}$ into a new vertex $v_{{2n-k+1}}$ for $k \in [2,n]$. 
\end{definition}

Finally, for a vertex $v \in V(G_k)$, we denote by $\SetOfVertex{v}$ the subset of vertices in $G$ eventually contracted into $v$ in $G_k$ and we denote $\GraphOfVertex{v} = G[\SetOfVertex{v}]$.

\section{Algorithm}\label{sec:algorithm}

In the \TC problem, we are given a graph $G = (V,E)$ and we are asked to count the number of triangles in $G$, that is, the number of elements in the set $\TriInGraph{G} = \{\{x,y,z\} \subseteq \binom{V}{3} \mid \{x,y\},\{y,z\},\{z,x\} \in E  \}$. We will prove the following theorem:

\begin{theorem}\label{thm:TriangleCounting}
    Let $G = (V,E)$ be a graph with $\tww(G) = d$, and let a compact representation of a $d$-contraction sequence as defined in Definition~\ref{def:compactRepresentation} be given. Then, one can solve \TC in time $\Oh(d^2n+m)$.
\end{theorem}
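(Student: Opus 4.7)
The plan is to walk through the given $d$-contraction sequence from $G_n = G$ down to $G_1$ and charge each triangle of $G$ to exactly one contraction step. For $T=\{x,y,z\} \in \TriInGraph{G}$ I define $k^\star(T) \in [2,n]$ as the largest $k$ for which $x,y,z$ still lie in three pairwise distinct supernodes of $G_k$; the contraction at step $k^\star(T)$ is then the first that merges two of these vertices, so $\{k^\star(T)\}$ yields a partition $\TriInGraph{G} = \bigsqcup_k \TriInGraph{G}_k$. At step $k$ with contraction $u=v_{i_k}, v=v_{j_k} \to w$, every $T \in \TriInGraph{G}_k$ has (up to symmetry) one vertex in $\SetOfVertex{u}$, one in $\SetOfVertex{v}$, and one in some $\SetOfVertex{s}$ with $s \in V(G_k) \setminus \{u,v\}$. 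I split by $s$ and by the types of $\{u,s\},\{v,s\}$ in $G_k$: if either is a non-edge then $s$ contributes $0$; if both are black, every $z \in \SetOfVertex{s}$ is $G$-adjacent to all of $\SetOfVertex{u} \cup \SetOfVertex{v}$, yielding $\NumRedE{u}{v}\cdot|\SetOfVertex{s}|$; and otherwise at least one of $\{u,s\},\{v,s\}$ is red so $s \in N_R(u) \cup N_R(v)$, giving at most $2d$ such \emph{hard} supernodes per step. Hence
\[
    |\TriInGraph{G}_k| \;=\; \NumRedE{u}{v}\cdot\tau_k(u,v) \;+\; \sum_{s \in H_k} c_{k,s},
\]
where $\tau_k(u,v) = \sum_{s \in B(u) \cap B(v)} |\SetOfVertex{s}|$ aggregates the easy ``black-black'' contribution ($B(u)$ denoting the black neighborhood of $u$) and each $c_{k,s}$ is a tripartite triangle count on $\SetOfVertex{u}, \SetOfVertex{v}, \SetOfVertex{s}$.

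The algorithm maintains, at every live supernode $u$: the size $|\SetOfVertex{u}|$, the red neighborhood $N_R(u)$ with multiplicity $\NumRedE{u}{r}$ on each red edge, the aggregate $\sigma(u) = \sum_{s \in B(u)} |\SetOfVertex{s}|$, and auxiliary data indexed by the $O(d^2)$ pairs of red neighbors of $u$, together with triangle counts on the red triangles of the current trigraph, that let me recover every hard $c_{k,s}$ in $O(d)$ time. Concretely, for pairs of red neighbors $r_1, r_2 \in N_R(u)$ I store a correlation $\sum_{z \in \SetOfVertex{u}} |N(z) \cap \SetOfVertex{r_1}|\cdot|N(z)\cap\SetOfVertex{r_2}|$, which handles the sub-cases in which the two red sides meet at $u$; and for each triangle of supernodes with all three pairs being edges but at least one red, I store the number of $G$-triangles it contains. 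Given these, $\NumRedE{u}{v}$ is read in $O(1)$ (either stored, or $|\SetOfVertex{u}|\cdot|\SetOfVertex{v}|$, or $0$, depending on the type of $\{u,v\}$), and $\tau_k(u,v)$ follows from $\sigma(u)$ via $O(d)$ inclusion--exclusion corrections that subtract the mass on $N_R(v)$ and on the at-most-$d$ ``extra red'' supernodes of $w$ (those black to $u$ but non-adjacent to $v$, or symmetrically). Each step then produces $|\TriInGraph{G}_k|$ in $O(d^2)$, and the subsequent updates at $w$ (red neighborhood, multiplicities, $\sigma$, and the $O(d^2)$ auxiliary entries around $w$ and its red neighbors) also fit in $O(d^2)$ amortized.

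Initialization on $G_n = G$, where every edge is black and no red edge exists, is handled in $O(n+m)$. Summing $O(d^2)$ over $n-1$ contractions gives the claimed $O(d^2 n + m)$ bound. The main obstacle I foresee is the hard case: while only $O(d)$ supernodes $s$ need treatment per contraction, each $c_{k,s}$ depends on the joint bipartite $G$-structure among $\SetOfVertex{u},\SetOfVertex{v},\SetOfVertex{s}$ and is not recoverable from the multiplicities $\NumRedE{\cdot}{\cdot}$ alone. The technical heart of the proof is choosing auxiliary invariants on red-incident pairs and red triangles so that (i) every hard $c_{k,s}$ is readable in $O(d)$ time, and (ii) each contraction refreshes these invariants in $O(d^2)$ amortized using only the bounded red-degree structure; the red-degree cap $d$ is used both to limit the number of hard supernodes per step and to keep the auxiliary data small.
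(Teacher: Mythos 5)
Your charging scheme (count each triangle at the contraction step where two of its three supernodes merge) is genuinely different from the paper's. The paper maintains only first-order data ($\n{x}$, $\m{x}$, and $\NumRedE{x}{y}$ on red edges) and proves an invariant in which a triangle is added to the counter exactly at the moment it transitions into one of the red-heavy configurations; at that moment its contribution is still expressible through the pre-contraction first-order data local to $u$, $v$ and $N_R(w)$, so no information about the joint $G$-structure across red edges is ever needed. Your scheme instead must, at the charging step, evaluate exact tripartite counts $c_{k,s}$ for configurations with two or three red sides, which is precisely what $\NumRedE{\cdot}{\cdot}$ does not determine --- and this is where your proposal has a genuine gap, one you flag yourself: the maintenance of the proposed second-order data is only asserted (``also fit in $O(d^2)$ amortized''), yet it is the technical heart of the whole argument.

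Two concrete problems with the auxiliary structure as stated. First, storing a count ``for each triangle of supernodes with all three pairs being edges but at least one red'' is too much: a single new red edge $\{w,x\}$ can lie in $\Theta(n)$ supernode triangles whose other two sides are black (all common black neighbours of $w$ and $x$), so merely writing these counts down at creation time costs $\Theta(n)$ per new red edge, i.e.\ up to $\Theta(dn^2)$ overall, destroying the budget. You must restrict storage to triangles with at least two red sides (all-red suffices, since two-red-one-black counts are exactly your correlations at the apex, and one-red-two-black counts are $\NumRedE{x}{y}\cdot\n{s}$ on the fly); with that restriction one can verify that the correlations at $w$ and at its new red neighbours, and the new all-red counts, are each obtainable in $O(1)$ from pre-contraction data, giving $O(d^2)$ per step --- but none of this verification appears in your write-up, and without it the claimed running time is not established. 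Second, you never account for performing the contraction itself: determining which neighbours of $w$ are black and which edges turn red requires scanning the adjacency lists of $u$ and $v$, which is not $O(d^2)$ per step; the paper charges the scan of common black neighbours against the decrease in the number of black edges to get $\Oh(dn+m)$ in total, and your proof needs this (or an equivalent) amortization. (Minor: $k^\star(T)$ should be the smallest, not the largest, $k$ for which the three supernodes are distinct.) With these repairs your route looks viable, but it requires strictly heavier bookkeeping than the paper's, which gets by with first-order quantities only.
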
 

Using the compact representation of the $d$-contraction sequence, we gradually construct the graphs $G = G_n, G_{n-1}, \ldots, G_1 = K_1$. 
Consider a trigraph $G_k = (V_k, E_k, R_k)$ of the contraction sequence of $G$ for $k \in [n]$ and a fixed triangle $\{a,b,c\}$ in $G$ with $a,b,c \in V(G)$. The vertices of the triangle can be in subgraphs corresponding to one, two, or three vertices of $V_k$. More formally, we observe the following:

\begin{observation}\label{obs:CasesOfTrianglesInGi}
Let $G=G_n ,\ldots, G_1 = K_1$ be a contraction sequence of a graph $G$, let $G_k = (V_k, E_k, R_k)$ a trigraph of the contraction sequence, and let $\{a,b,c\}$ be a triangle in $G$. Then, exactly one of the following statements is true (after possibly reordering $a$, $b$, and $c$):
\begin{enumerate}[$(i)$]
	\item $a \in \SetOfVertex{x}$, $b \in \SetOfVertex{y}$, $c \in \SetOfVertex{z}$ with $\{x,y\},\{y,z\},\{z,x\} \in E_k$ \label{obs:item:TriangleThreeBlack}
	
	\item $a \in \SetOfVertex{x}$, $b \in \SetOfVertex{y}$, $c \in \SetOfVertex{z}$ with $\{x,y\},\{y,z\} \in E_k$, $\{z,x\} \in R_k$ \label{obs:item:TriangleTwoBlack}
	
	\item $a \in \SetOfVertex{x}$, $b \in \SetOfVertex{y}$, $c \in \SetOfVertex{z}$ with $\{x,y\} \in E_k$, $\{y,z\},\{z,x\} \in R_k$ \hfill $(\star)$
	\label{obs:item:TriangleOneBlack}
	
	\item $a \in \SetOfVertex{x}$, $b \in \SetOfVertex{y}$, $c \in \SetOfVertex{z}$ with  $\{x,y\},\{y,z\},\{z,x\} \in R_k$ \hfill $(\star)$
	\label{obs:item:TriangleZeroBlack}
	
        \item $a \in \SetOfVertex{x}$ and $b,c \in \SetOfVertex{y}$ with $\{x,y\} \in E_k$ 
        \label{obs:item:EdgeBlack}
		
	\item $a \in \SetOfVertex{x}$ and $b,c \in \SetOfVertex{y}$ with $\{x,y\} \in R_k$ \hfill $(\star)$
	\label{obs:item:EdgeRed}
	
	\item $a,b,c \in \SetOfVertex{x}$ for $x \in V(G_k)$ \hfill $(\star)$
	\label{obs:item:Vertex}
\end{enumerate}
\end{observation}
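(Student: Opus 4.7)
The plan is to derive the observation from two structural facts about the contraction sequence: $(a)$ the family $\{V(x) : x \in V(G_k)\}$ partitions $V(G)$, and $(b)$ whenever $u,v \in V(G)$ lie in different parts $V(x)$ and $V(y)$ of this partition and $\{u,v\} \in E(G)$, the pair $\{x,y\}$ must be an edge of $G_k$, either black or red. Given these two facts, the observation reduces to a clean case split on $|\{x,y,z\}|$, the number of distinct parts of $V(G_k)$ that meet the triangle $\{a,b,c\}$.

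First I would verify fact $(a)$ by an easy induction on $n-k$: at step $k=n$ every part is a singleton, and a single contraction replaces two parts $V(u),V(v)$ by their union $V(w) = V(u) \cup V(v)$, so disjointness and total coverage of $V(G)$ are preserved. Next I would prove fact $(b)$ by induction on $n-k$ as well. In $G_n = G$ (with $R_n = \emptyset$) the claim is immediate since every $V(x)$ is a singleton. For the inductive step, consider $G_{k-1} = G_k / v_{i_k}, v_{j_k}$ and a black edge $\{u,v\} \in E(G)$ with $u \in V(x')$ and $v \in V(y')$ in $G_{k-1}$, $x' \neq y'$. If neither $x'$ nor $y'$ is the new contracted vertex $w$, then $\{x',y'\}$ equals an edge of $G_k$ (by IH in $E_k \cup R_k$) and its status is unchanged in $G_{k-1}$. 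Otherwise say $x' = w$ arises from $v_{i_k}, v_{j_k}$; then $u$ lies in $V(v_{i_k})$ or $V(v_{j_k})$, so by IH the pair $\{v_{i_k}, y'\}$ or $\{v_{j_k}, y'\}$ is an edge of $G_k$, and by the contraction rule the only way $\{w,y'\} \notin E_{k-1} \cup R_{k-1}$ would be that $y'$ is adjacent to neither $v_{i_k}$ nor $v_{j_k}$ in $G_k$, which is impossible. So $\{x',y'\} \in E_{k-1} \cup R_{k-1}$.

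With $(a)$ and $(b)$ in hand, let $x,y,z \in V(G_k)$ be the (not necessarily distinct) parts containing $a,b,c$. If $|\{x,y,z\}|=3$, then $(b)$ applied to the three edges of the triangle shows that each of $\{x,y\},\{y,z\},\{z,x\}$ lies in $E_k \cup R_k$; partitioning the $2^3=8$ color assignments by the number of red edges (and using the symmetry of reordering $a,b,c$) gives exactly the four cases \eqref{obs:item:TriangleThreeBlack}--\eqref{obs:item:TriangleZeroBlack}. If $|\{x,y,z\}|=2$, after relabelling we may assume $a \in V(x)$ and $b,c \in V(y)$; the edge $\{a,b\} \in E(G)$ and fact $(b)$ force $\{x,y\} \in E_k \cup R_k$, giving cases \eqref{obs:item:EdgeBlack} and \eqref{obs:item:EdgeRed}. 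If $|\{x,y,z\}|=1$ we obtain case \eqref{obs:item:Vertex}. Mutual exclusivity is immediate: the number of distinct parts meeting the triangle is determined by the triangle, $E_k$ and $R_k$ are disjoint by definition, and within each cardinality the cases are distinguished by the precise multiset of colors.

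The main obstacle is really fact $(b)$, since the other two facts are bookkeeping. The subtle point in the induction is that the contraction rule defines the new edge color from the color of edges to $v_{i_k}$ and $v_{j_k}$, not directly from the underlying edges of $G$, so one has to argue that ``no edge at all in $G_k$'' is incompatible with the existence of a single underlying edge in $G$ between the two merged parts — which is exactly what the ``neither adjacent to $u$ nor to $v$'' clause of the contraction rule, combined with the inductive hypothesis, rules out.
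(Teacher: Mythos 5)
Your proof is correct. The paper states this as an \emph{observation} and offers no proof at all, treating it as self-evident; your write-up supplies the missing justification and does so along the natural lines. The two structural facts you isolate are exactly the right skeleton: that $\{\SetOfVertex{x} : x \in V(G_k)\}$ partitions $V(G)$, and that any edge of $G$ crossing two distinct parts forces a black or red edge between the corresponding vertices of $G_k$. Your inductive argument for the second fact is sound --- the only way $\{w,y'\}$ could vanish under the contraction rule is if $y'$ were adjacent to neither contracted vertex, which the inductive hypothesis excludes --- and the subsequent case split on the number of distinct parts meeting the triangle, refined by the multiset of edge colors, yields precisely the seven cases and their mutual exclusivity. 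Nothing further is needed; if anything, your version makes explicit the one genuinely non-trivial point (your fact $(b)$) that the paper's later correctness proof of Theorem~\ref{thm:TriangleCounting} implicitly relies on when it tracks case transitions between $G_k$ and $G_{k-1}$.
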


Let $G_{k-1} = G_k/u,v$.
A triangle $T$ of $G$ might transition from a case in $G_{k-1}$ to a different case in $G_k$ if $T$ consists of vertices in $\SetOfVertex{u}$ or $\SetOfVertex{v}$. If $T$ consists of vertices in both $\SetOfVertex{u}$ and $\SetOfVertex{v}$, one vertex less is needed in $G_{k-1}$ to specify $T$. If $T$ does only admit a non-empty cut with one of the two sets $\SetOfVertex{u}$ or $\SetOfVertex{v}$, the incident edges of $u$ (resp.\ $v$) in $G_{k-1}$ might turned red.
See Figure~\ref{fig:ChangeOfCases} for a full diagram of all possible case transitions for a triangle in $G$ from $G_k$ to $G_{k-1}$.

\begin{figure}[t]
    \centering
    \begin{tabular}{c|c}
        \begin{subfigure}[b]{0.5\textwidth}
            \begin{tikzpicture}
            		\node [gray] (1) at (0, 4) {\ObsCase{\ref*{obs:item:TriangleThreeBlack}}};
            		\node [gray] (2) at (0, 3) {\ObsCase{\ref*{obs:item:TriangleTwoBlack}}};
            		\node [black,very thick] (3) at (0, 2) {\ObsCase{\ref*{obs:item:TriangleOneBlack}}$^*$};
            		\node [black,very thick] (4) at (0, 1) {\ObsCase{\ref*{obs:item:TriangleZeroBlack}}$^*$};
            		\node [gray] (5) at (3, 2) {\ObsCase{\ref*{obs:item:EdgeBlack}}};
            		\node [black,very thick] (6) at (3, 1) {\ObsCase{\ref*{obs:item:EdgeRed}}$^*$};
            		\node [black,very thick] (7) at (6, 1) {\ObsCase{\ref*{obs:item:Vertex}}$^*$};
            
            		\draw [->,gray] (1) to (5);
            		\draw [->,gray] (2) to (5);
            		\draw [->,gray] (3) to (6);
             		\draw [->,gray] (4) to (6);
            		\draw [->, very thick] (5) to (7);
            		\draw [->,gray] (6) to (7);
            		\draw [->,gray] (1) to (2);
            		\draw [->, very thick] (2) to (3);
            		\draw [->,gray] (3) to (4);
            		\draw [->, very thick] (5) to (6);
            		\draw [->, very thick] (2) to (6);
            		\draw [->, bend right= 50, very thick] (1) to (3);
            		\draw [->, bend right = 50, very thick] (2) to (4);
            \end{tikzpicture}
            \caption{Diagram of case transitions.}
            \label{fig:red_square}
        \end{subfigure}
        &
        \begin{subfigure}[b]{0.5\textwidth}
            \resizebox{0.9\columnwidth}{!}{
            \begin{tikzpicture}[]
                    \node[shape=circle,draw=black,fill, inner sep=2pt] (11)  at (0,5) {};
                    \node[shape=circle,draw=black,fill, inner sep=2pt] (12)  at (1,5) {};
                    \node[shape=circle,draw=black,fill, inner sep=2pt] (13)  at (0.5,5.85) {};
                    
                    \draw (11) -- (12);
                    \draw (11) -- (13);
                    \draw (12) -- (13);
            		\node [black] (0) at (0.5, 4.5) {\ObsCase{\ref*{obs:item:TriangleThreeBlack}}};	
            		\node[shape=circle,draw=black,fill, inner sep=2pt] (21)  at (2,5) {};
                    \node[shape=circle,draw=black,fill, inner sep=2pt] (22)  at (3,5) {};
                    \node[shape=circle,draw=black,fill, inner sep=2pt] (23)  at (2.5,5.85) {};
                    
                    \draw[red] (21) -- (22);
                    \draw (21) -- (23);
                    \draw (22) -- (23);
            		\node [black] (1) at (2.5, 4.5) {\ObsCase{\ref*{obs:item:TriangleTwoBlack}}};
                    \node[shape=circle,draw=black,fill, inner sep=2pt] (31)  at (4,5) {};
                    \node[shape=circle,draw=black,fill, inner sep=2pt] (32)  at (5,5) {};
                    \node[shape=circle,draw=black,fill, inner sep=2pt] (33)  at (4.5,5.85) {};
                    
                    \draw (31) -- (32);
                    \draw[red,] (31) -- (33);
                    \draw[red] (32) -- (33);
            		\node[] (2) at (4.5, 4.5) {\ObsCase{\ref*{obs:item:TriangleOneBlack}}$^*$};
                    \node[shape=circle,draw=black,fill, inner sep=2pt] (41)  at (6,5) {};
                    \node[shape=circle,draw=black,fill, inner sep=2pt] (42)  at (7,5) {};
                    \node[shape=circle,draw=black,fill, inner sep=2pt] (43)  at (6.5,5.85) {};
                    
                    \draw[red] (41) -- (42);
                    \draw[red] (41) -- (43);
                    \draw[red] (42) -- (43);		
            		\node[] (3) at (6.5, 4.5) {\ObsCase{\ref*{obs:item:TriangleZeroBlack}}$^*$};
            		\node[shape=circle,draw=black,fill, inner sep=2pt] (51)  at (1,3) {};
                    \node[shape=circle,draw=black,fill, inner sep=2pt] (52)  at (2,3) {};
                    
                    \draw (51) -- (52);
            		\node (4) at (1.5, 2.5) {\ObsCase{\ref*{obs:item:EdgeBlack}}};
            		\node[shape=circle,draw=black,fill, inner sep=2pt] (61)  at (3.5,3) {};
                    \node[shape=circle,draw=black,fill, inner sep=2pt] (62)  at (4.5,3) {};		
            		
            		\draw[red] (61) -- (62);
            		\node (5) at (4, 2.5) {\ObsCase{\ref*{obs:item:EdgeRed}}$^*$};
            		\node[shape=circle,draw=black,fill, inner sep=2pt] (71)  at (6,3) {};		
            		\node (6) at (6, 2.5) {\ObsCase{\ref*{obs:item:Vertex}}$^*$};
            \end{tikzpicture}
            }
            \caption{Illustration of the different cases.}
            \label{fig:blue_square}
        \end{subfigure}
    \end{tabular}
    \caption{Possible case transitions of a triangle in $G$ from $G_k$ to $G_{k-1}$ as described in Observation~\ref{obs:CasesOfTrianglesInGi}. Cases that are stored in the variable $t$ in the invariant are marked by a star. Selfloops are omitted. Transitions from an unmarked case to a marked case are represented by thick arrows.}
    \label{fig:ChangeOfCases}
\end{figure}
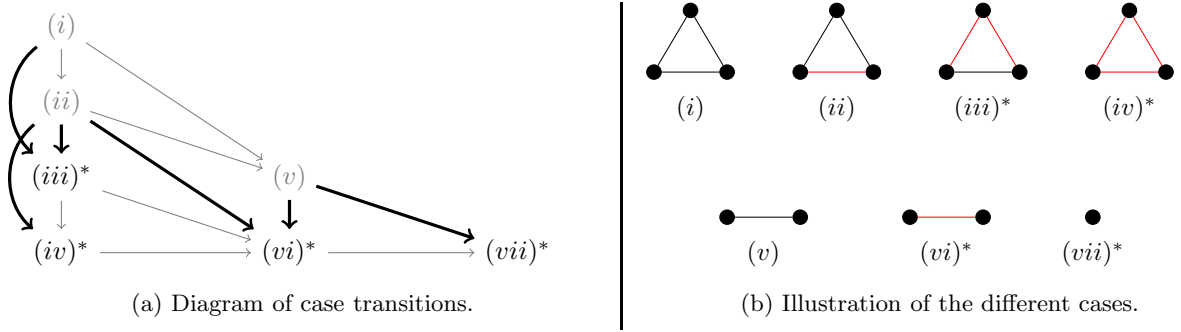    

Over the course of the algorithm, we store the number of triangles of $G$ that appear in $G_k$ as a case marked by a star $(\star)$ in Observation~\ref{obs:CasesOfTrianglesInGi} in a variable $t_k$. To do so, we keep track of the number of vertices and the number of edges of the subgraphs of $G$ that get contracted into a vertex $x$ in $G_k$, i.e., the values $\n{x} = \vert V(\GraphOfVertex{x}) \vert $ and $\m{x} = \vert E(\GraphOfVertex{x}) \vert$ for each vertex $x \in V_k$. Also, we store for each red edge $\{x,y\} \in R_k$, the number of edges between $\GraphOfVertex{x}$ and $\GraphOfVertex{y}$, i.e., $\NumRedE{x}{y} = \vert \{ \{a, b\} \in E \mid a \in \SetOfVertex{x}, b \in  \SetOfVertex{y} \} \vert$.

In each iteration $k$ of the algorithm, the two vertices $v_{i_k}$ and $v_{j_k}$, given in the compact representation of the contraction sequence, are contracted to form the trigraph $G_{k-1}$. The algorithm updates the auxiliary values $\n{x}$, $\m{x}$, and $\NumRedE{x}{y}$ that have changed, for all $x,y \in V_k$.
Informally, the algorithm then counts a triangle of $G$ whenever it becomes a case marked by a star in Observation~\ref{obs:CasesOfTrianglesInGi} for the first time. Whenever a triangle is of a marked case, it cannot transition back to an unmarked case and, eventually, all triangles will be of Case~\ObsCase{\ref{obs:item:Vertex}}. Note, that for $G = G_n$, every triangle is of Case~\ObsCase{\ref{obs:item:TriangleThreeBlack}}.

More precisely, the transition from Case~\ObsCase{\ref{obs:item:EdgeBlack}} to Case~\ObsCase{\ref{obs:item:Vertex}} is dealt by the main Algorithm \ref{alg:BaseAlgorithm}. In addition, the procedure \textsc{TriCountOneNeighbor} focuses on the transitions from Case~\ObsCase{\ref{obs:item:TriangleTwoBlack}} and Case~\ObsCase{\ref{obs:item:EdgeBlack}} to Case~\ObsCase{\ref{obs:item:EdgeRed}} whereas the procedure \textsc{TriCountTwoNeighbors} handles the transitions from Case~\ObsCase{\ref{obs:item:TriangleThreeBlack}} and Case~\ObsCase{\ref{obs:item:TriangleTwoBlack}} to Case~\ObsCase{\ref{obs:item:TriangleOneBlack}} and from Case~\ObsCase{\ref{obs:item:TriangleTwoBlack}} to Case~\ObsCase{\ref{obs:item:TriangleZeroBlack}}.

Consider a trigraph $G_k$ that will be contracted into $G_{k-1} = G_k/{v_{i_k},v_{j_k}}$ according to the contraction sequence. For simplicity, we define $u := v_{i_k}$  and $v := v_{j_k}$ as the two vertices in $V(G_k)$ that get contracted into the new vertex $w := v_{2n-k+1}$ of $G_{k-1}$.

For the vertex $w \in V(G_{k-1})$, the number of vertices in $\GraphOfVertex{w}$ is the sum of these numbers in $\GraphOfVertex{u}$ and $\GraphOfVertex{v}$. For the number of edges, we also need to add the number of edges between $\GraphOfVertex{u}$ and $\GraphOfVertex{v}$. Therefore, $\n{w} = \n{u} + \n{v}$ and 
\begin{align*}
    \m{w} = \begin{cases} 
    \m{u} + \m{v} + \n{u} \cdot \n{v}  & \mbox{if $\{u,v\} \in E_k$,} \\ 
    \m{u} + \m{v} + \NumRedE{u}{v}  & \mbox{if $\{u,v\} \in R_k$,} \\ 
    \m{u} + \m{v} & \mbox{otherwise.} 
          \end{cases}
\end{align*}
For every other vertex $x \in V(G_{k-1})$, $x \neq w$, these values remain unchanged.
Finally, for any vertex $x \in V(G_{k-1})$, such that $\{w,x\} \in R(G_{k-1})$ the number of edges between $G^w$ and $G^x$ can be computed as follows: 
\begin{align*}
    \NumRedE{w}{x} = 
    \begin{cases} 
        \n{u} \cdot \n{x}  & \mbox{if $\{u,x\} \in E_k$ and $\{v,x\} \notin (E_k \cup R_k)$,} \\ 
        \n{v} \cdot \n{x}  & \mbox{if $\{u,x\} \notin (E_k \cup R_k)$ and $\{v,x\} \in E_k$,} \\ 
        \NumRedE{u}{x} & \mbox{if $\{u,x\} \in R_k$ and $\{v,x\} \notin (E_k \cup R_k)$,} \\ 
        \NumRedE{v}{x} & \mbox{if $\{u,x\} \notin (E_k \cup R_k)$ and $\{v,x\} \in R_k$,} \\
        \NumRedE{u}{x} + \n{v} \cdot \n{x} & \mbox{if $\{u,x\} \in R_k$ and $\{v,x\} \in E_k$,} \\ 
        \NumRedE{v}{x} + \n{u} \cdot \n{x} & \mbox{if $\{u,x\} \in E_k$ and $\{v,x\} \in R_k$,} \\ 
        \NumRedE{u}{x} + \NumRedE{v}{x}  & \mbox{if $\{u,v\} \in R_k$ and $\{v,x\} \in R_k$.} 
    \end{cases}
\end{align*}
For every other vertex $y \in V(G_{k-1})$, $y \neq w$, such that $ \{x,y\} \in R(G_{k-1})$, the value $\NumRedE{x}{y}$ remains unchanged.

We give a pseudocode of the algorithm below. For algorithmic purposes, we assume that we are given a graph $G=(V,E)$ that will, over the course of the algorithm, be updated into the successive trigraphs defined by the contraction sequence. Similarly, the variable $t$ in the algorithm represents the number of triangles in $G$ computed so far. The procedure \textsc{UpdateAuxiliaryValues} is not given but is explained previously whereas the procedures \textsc{TriCountOneNeighbor} and \textsc{TriCountTwoNeighbors} will be described in the next paragraphs. Finally, \textsc{UpdateGraph} performs the actual contraction of the graph $G$. The pseudocode of this procedure is omitted.

\begin{algorithm}
    \hspace*{\algorithmicindent} \textbf{Input:}  A graph $G=(V,E)$ and a compact representation of a contraction sequence $(v_{i_k},v_{j_k})_{n \geq k \geq 2}$\\
    \hspace*{\algorithmicindent} \textbf{Output: }  The number of triangles $\tr$ in the graph $G$
    \begin{algorithmic}[1]
        \State $\tr := 0$ \Comment{number of triangles in $G$ of Case~\ObsCase{\ref{obs:item:TriangleOneBlack}}, \ObsCase{\ref{obs:item:TriangleZeroBlack}}, \ObsCase{\ref{obs:item:EdgeRed}}, and \ObsCase{\ref{obs:item:Vertex}}}
        \State $R = \emptyset$
        \For{every vertex $x \in V$}                               
            \State $\n{x} := 1$       \Comment{number of vertices in $\GraphOfVertex{x}$}   
            \State $\m{x} := 0$       \Comment{number of edges in $\GraphOfVertex{x}$}
        \EndFor
        \For{$n \geq k \geq 2$}                              
            \State $u := v_{i_k}$
            \State $v := v_{j_k}$
            \State $w := v_{2n-k+1}$
            \State \textsc{UpdateAuxiliaryValues}$(G,u,v,w)$ \;
            \If{$\{u,v\} \in E$}
                \State $\tr = \tr + \n{u} \cdot \m{v} + \n{v} \cdot \m{u}$
                \Comment{Case~\ObsCase{\ref{obs:item:EdgeBlack}} $\rightarrow$ Case~\ObsCase{\ref{obs:item:Vertex}}}\label{alg:BlackEdge:Vertex}
            \EndIf
            \For{each $x \in N_R(w)$} 
                \State \textsc{TriCountOneNeighbor}$(G,u,v,w,x)$        
                \State \textsc{TriCountTwoNeighbors}$(G,u,v,w,x)$        
            \EndFor
                \State \textsc{UpdateGraph}$(G,u,v,w)$  
        \EndFor
        \State \textbf{return} $t$
    \end{algorithmic}
    \caption{}
    \label{alg:BaseAlgorithm}
\end{algorithm}

In the procedure \textsc{TriCountOneNeighbor}, we focus on the red edge between the newly introduced vertex $w$ and one of its red neighbors $x \in N_R(w)$. We then consider the different edges between $u$, $v$, and $x$ to detect triangles in $G$ that transition from Case~\ObsCase{\ref{obs:item:TriangleTwoBlack}} and Case~\ObsCase{\ref{obs:item:EdgeBlack}} to Case~\ObsCase{\ref{obs:item:EdgeRed}}.

\begin{algorithm}[H]
    \begin{algorithmic}[1]
        \Procedure{TriCountOneNeighbor}{$G,u,v,w,x$}
            \If{$\{u,x\} \in E$}
                \State $\tr = \tr + \n{u} \cdot \m{x} + \n{x} \cdot \m{u}$ 
                \Comment{Case~\ObsCase{\ref{obs:item:EdgeBlack}} $\rightarrow$ Case~\ObsCase{\ref{obs:item:EdgeRed}}}\label{alg:EdgeBlack:EdgeRed:u}
                    \If{$\{u,v\}\in E$ and $\{v,x\} \in R$}
                        \State $\tr = \tr + \NumRedE{v}{x} \cdot \n{u}$
                        \Comment{Case~\ObsCase{\ref{obs:item:TriangleTwoBlack}} $\rightarrow$ Case~\ObsCase{\ref{obs:item:EdgeRed}}}\label{alg:TriangleTwoBlack:EdgeRed:2}
                    \EndIf
            \ElsIf{$\{v,x\} \in E$}
                        \State $\tr = \tr + \n{v} \cdot \m{x} + \n{x} \cdot \m{v}$
                        \Comment{Case~\ObsCase{\ref{obs:item:EdgeBlack}} $\rightarrow$ Case~\ObsCase{\ref{obs:item:EdgeRed}}}\label{alg:EdgeBlack:EdgeRed:v}
                        \If{$\{u,v\}\in E$ and $\{u,x\} \in R$}
                            \State $\tr = \tr + \NumRedE{u}{x} \cdot \n{v}$
                            \Comment{Case~\ObsCase{\ref{obs:item:TriangleTwoBlack}} $\rightarrow$ Case~\ObsCase{\ref{obs:item:EdgeRed}}}\label{alg:TriangleTwoBlack:EdgeRed:1}
                        \EndIf  
            \EndIf
        \EndProcedure
    \end{algorithmic}
    \label{alg:TriCountOverRedEdge}
\end{algorithm}

Finally, in the procedure \textsc{TriCountTwoNeighbors} we focus on the edges between the newly introduced vertex $w$ and two of its neighbors $x,y \in N_R(w)$. We  then consider the different edges between $u$, $v$, $x$, and $y$ to detect triangles in $G$ that transitions from Case~\ObsCase{\ref{obs:item:TriangleThreeBlack}} and Case~\ObsCase{\ref{obs:item:TriangleTwoBlack}} to Case~\ObsCase{\ref{obs:item:TriangleOneBlack}} and from Case~\ObsCase{\ref{obs:item:TriangleTwoBlack}} to Case~\ObsCase{\ref{obs:item:TriangleZeroBlack}}.

\begin{algorithm}[H]
    \begin{algorithmic}[1]
        \Procedure{TriCountTwoNeighbors}{$G,u,v,w,x$}
            \For{each $y \in V$ such that $\{x,y\} \in R$ and $\{w,y\} \in E$}
                \If{$\{u,x\} \in E$}
                    \State $\tr = \tr + \NumRedE{x}{y} \cdot \n{u}$\Comment{Case~\ObsCase{\ref{obs:item:TriangleTwoBlack}} $\rightarrow$ Case~\ObsCase{\ref{obs:item:TriangleOneBlack}}} \label{alg:TwoBlack:OneBlack:xyRed:u}
                \ElsIf{$\{v,x\} \in E$}
                    \State $\tr = \tr + \NumRedE{x}{y} \cdot \n{v}$
                    \Comment{Case~\ObsCase{\ref{obs:item:TriangleTwoBlack}} $\rightarrow$ Case~\ObsCase{\ref{obs:item:TriangleOneBlack}}}\label{alg:TwoBlack:OneBlack:xyRed:v}
                \EndIf
            \EndFor
            \For{each $y \in V$ such that $\{w,y\} \in R$}
                \If{$\{x,y\} \in E$}
                    \If{$\{u,x\} \in E$ and $\{u,y\} \in E$}
                        \State $\tr = \tr + \n{u} \cdot \n{x} \cdot \n{y}$
                        \Comment{Case~\ObsCase{\ref{obs:item:TriangleThreeBlack}} $\rightarrow$ Case~\ObsCase{\ref{obs:item:TriangleOneBlack}}}\label{alg:ThreeBlack:OneBlack1}
                    \ElsIf{$\{v,x\} \in E$ and $\{v,y\} \in E$}
                        \State $\tr = \tr + \n{v} \cdot \n{x} \cdot \n{y}$
                       \Comment{Case~\ObsCase{\ref{obs:item:TriangleThreeBlack}} $\rightarrow$ Case~\ObsCase{\ref{obs:item:TriangleOneBlack}}}\label{alg:ThreeBlack:OneBlack2}
                    \EndIf
                    \If{$\{u,x\} \in R$ and $\{u,y\} \in E$}
                        \State $\tr = \tr + \NumRedE{u}{x} \cdot \n{y}$
                        \Comment{Case~\ObsCase{\ref{obs:item:TriangleTwoBlack}} $\rightarrow$ Case~\ObsCase{\ref{obs:item:TriangleOneBlack}}}\label{alg:TriangleTwoBlack:TriangleOneBlack:xyBlack:1}
                    \ElsIf{$\{v,x\} \in R$ and $\{v,y\} \in E$}
                        \State $\tr = \tr + \NumRedE{v}{x} \cdot \n{y}$ 
                        \Comment{Case~\ObsCase{\ref{obs:item:TriangleTwoBlack}} $\rightarrow$ Case~\ObsCase{\ref{obs:item:TriangleOneBlack}}}\label{alg:TriangleTwoBlack:TriangleOneBlack:xyBlack:2}
                    \EndIf
                    \If{$\{u,x\} \in E$ and $\{u,x\} \in R$}
                        \State $\tr = \tr + \NumRedE{u}{x} \cdot \n{x}$ 
                        \Comment{Case~\ObsCase{\ref{obs:item:TriangleTwoBlack}} $\rightarrow$ Case~\ObsCase{\ref{obs:item:TriangleOneBlack}}}
                    \ElsIf{$\{v,x\} \in E$ and $\{v,y\} \in R$}\label{alg:TriangleTwoBlack:TriangleOneBlack:xyBlack:3}
                        \State $\tr = \tr + \NumRedE{v}{x} \cdot \n{x}$ 
                        \Comment{Case~\ObsCase{\ref{obs:item:TriangleTwoBlack}} $\rightarrow$ Case~\ObsCase{\ref{obs:item:TriangleOneBlack}}}\label{alg:TriangleTwoBlack:TriangleOneBlack:xyBlack:4}
                    \EndIf
                \ElsIf{$\{x,y\} \in R$}
                    \If{$\{u,x\} \in E$ and $\{u,y\} \in E$}
                        \State $\tr =   \tr +  \NumRedE{x}{y} \cdot \n{u}$
                        \Comment{Case~\ObsCase{\ref{obs:item:TriangleTwoBlack}} $\rightarrow$ Case~\ObsCase{\ref{obs:item:TriangleZeroBlack}}}\label{alg:TriangleTwoBlack:TriangleZeroBlack:1}
                    \ElsIf{$\{v,x\} \in E$ and $\{v,y\} \in E$}
                        \State $\tr =  \tr +  \NumRedE{x}{y} \cdot \n{v}$
                        \Comment{Case~\ObsCase{\ref{obs:item:TriangleTwoBlack}} $\rightarrow$ Case~\ObsCase{\ref{obs:item:TriangleZeroBlack}}}\label{alg:TriangleTwoBlack:TriangleZeroBlack:2}
                    \EndIf
                \EndIf 
            \EndFor
        \EndProcedure
    \end{algorithmic}
    \label{alg:TriCountOverRedWedge}
\end{algorithm}

\noindent
We have now described the algorithm and can prove Theorem~\ref{thm:TriangleCounting}.

\begin{proof}[Proof of Theorem~\ref{thm:TriangleCounting}]
Given the compact representation of the $d$-sequence $(v_{i_k},v_{j_k})_{n \geq k \geq 2}$, the algorithm generates iteratively the contraction sequence $G = G_n, G_{n-1}, \ldots, G_1 = K_1$ with $G_{k-1} = G_k/{v_{i_k},v_{j_k}}$ using the procedure \textsc{UpdateGraph} at the end of each iteration.
The values $\n{x}$, $\m{x}$, and $\NumRedE{x}{y}$ are updated by the procedure \textsc{UpdateAuxiliaryValues} in each iteration as described in the previous paragraph.

To prove that the final value of $t$ is equal to the number of triangles in $G$, we will prove that the following invariant is true at the beginning of each iteration, i.e., for each graph $G_k = (V_k,E_k,R_k)$ in the contraction sequence for $k \in [n]$: 
\begin{align*}
  \vert \TriInGraph{G} \vert = t_k + \underbrace{\sum_{\substack{\{x,y\},\\ \{y,z\},\{x,z\} \in E_k}} \n{x} \n{y} \n{z}}_{\text{Case~\ObsCase{\ref*{obs:item:TriangleThreeBlack}}}} + \underbrace{\sum_{\substack{ \{x,z\} \in R_k \\\{x,y\},\{y,z\}\in E_k }} \NumRedE{x}{z} \n{y}}_{\text{Case~\ObsCase{\ref*{obs:item:TriangleTwoBlack}}}} +\underbrace{\sum_{\{x,y\} \in E_k} (\n{x} \m{y} + \m{x} \n{y})}_{\text{Case~\ObsCase{\ref*{obs:item:EdgeBlack}}}} 
\end{align*}
We denote by $t_k$ the current value of $t$ at the start of iteration $k$ (and $t_1$ the final value after iteration $k = 2$).
Recall that $\vert T(G) \vert$ denotes the number of triangles in $G$.
For $k = n$, the value of $t_n$ is initialized to zero 
since $R_n = \emptyset$, $\m{x} = 0$, and $\n{x}=1$ for all $x \in V_n$. Therefore, the invariant simplifies to the second summand only, which is indeed the desired number of all triangles in $G$.
We will show that the value of the invariant will never change. Thus, for $i = 1$, it then holds that $\vert \TriInGraph{G} \vert = t_1 + 0 + 0+0$ and the correctness of Algorithm~\ref{alg:BaseAlgorithm} follows.

As described in Observation~\ref{obs:CasesOfTrianglesInGi}, we distinguish seven cases of a possible occurrence of a triangle of $G$ in $G_k$. In the beginning, all triangles in $G$ are of Case~\ObsCase{\ref{obs:item:TriangleThreeBlack}} but some may change from a case to another one whenever $G_k$ gets contracted to $G_{k-1}$. For a fixed triangle, all possible case transitions are depicted in Figure~\ref{fig:ChangeOfCases}.
Notice that the triangles of $G$ of Case~\ObsCase{\ref{obs:item:TriangleThreeBlack}}, \ObsCase{\ref{obs:item:TriangleTwoBlack}}, or \ObsCase{\ref{obs:item:EdgeBlack}}, are counted directly by the corresponding sums in the invariant.
We are left to show that the (current) value of $t_k$ is indeed the count of all triangles of $G$ that appear in $G_k$ of Case~\ObsCase{\ref{obs:item:TriangleOneBlack}},\ObsCase{\ref{obs:item:TriangleZeroBlack}}, \ObsCase{\ref{obs:item:EdgeRed}}, or \ObsCase{\ref{obs:item:Vertex}}. 
Notice that once a fixed triangle is of one of the latter cases, this triangle can never transition back to an unmarked case.
 
By induction, we can assume that the invariant is true for $G_k$. To prove the invariant for $k-1$, we keep track of all triangles whose case changes from $G_k$ to $G_{k-1}$ regarding Observation~\ref{obs:CasesOfTrianglesInGi}. Note that we only need to consider the triangles that are of a case that is not marked by a star in $G_k$, but in a case that is marked in $G_{k-1}$.
Let $G_{k-1} = G_k /u,v$ and let $w$ be the new vertex of $G_{k-1}$.

Case~\ObsCase{\ref{obs:item:TriangleThreeBlack}} to Case~\ObsCase{\ref{obs:item:TriangleOneBlack}}: 
Let $\{a,b,c\}$ be a triangle in $G$ that is of Case~\ObsCase{\ref{obs:item:TriangleThreeBlack}} in $G_k$ but of Case~\ObsCase{\ref{obs:item:TriangleOneBlack}} in $G_{k-1}$. 
This implies that there exist $x,y \in V_{k-1}$ with $a \in \SetOfVertex{w}$, $b \in \SetOfVertex{x}$, and $c \in \SetOfVertex{y}$ 
and $\{w,x\}, \{w,y\} \in R_k$. Since $w$ is the contraction of $u$ and $v$, it holds that either $a \in \SetOfVertex{u}$ with $\{u,x\}, \{u,y\} \in R_k$ or $a \in \SetOfVertex{v}$ with $\{v,x\}, \{v,y\} \in R_k$. In the former case, it is counted in the procedure \textsc{TriCountTwoNeighbors}, Line~\ref{alg:ThreeBlack:OneBlack1}. In the latter case it is counted in Line~\ref{alg:ThreeBlack:OneBlack2}.

Case~\ObsCase{\ref{obs:item:TriangleTwoBlack}} to Case~\ObsCase{\ref{obs:item:TriangleOneBlack}}:
This implies that there exist $x,y \in V_{k-1}$ with $a \in \SetOfVertex{w}$, $b \in \SetOfVertex{x}$, and $c \in \SetOfVertex{y}$.
Let us first assume that $\{x,y\} \in R_{k-1}$. Since $\{a,b,c\}$ is a triangle of Case~\ObsCase{\ref{obs:item:TriangleTwoBlack}} in $G_k$, it holds that either $\{u,x\},\{u,y\} \in E_k$ or $\{v,x\},\{x,y\} \in E_k$. In the former case, it is counted in the procedure \textsc{TriCountTwoNeighbors}, Line~\ref{alg:TwoBlack:OneBlack:xyRed:u}, and in the latter case, in Line~\ref{alg:TwoBlack:OneBlack:xyRed:v}.
Now assume that $\{x,y\} \in E_{k-1}$, i.e., $\{w,x\},\{w,y\} \in R_{k-1}$.  Since $\{a,b,c\}$ is a triangle of Case~\ObsCase{\ref{obs:item:TriangleTwoBlack}} in $G_k$, it now holds that either  $\{u,x\} \in R_k$ and $\{u,y\} \in E_k$ (counted in the procedure \textsc{TriCountTwoNeighbors}, Line~\ref{alg:TriangleTwoBlack:TriangleOneBlack:xyBlack:1}); $\{v,x\} \in R_k$ and $\{v,y\} \in E_k$ (Line~\ref{alg:TriangleTwoBlack:TriangleOneBlack:xyBlack:2}); $\{u,x\} \in E_k$ and $\{u,y\} \in R_k$ (Line~\ref{alg:TriangleTwoBlack:TriangleOneBlack:xyBlack:3}); or $\{v,x\} \in E_k$ and $\{v,y\} \in R_k$ (Line~\ref{alg:TriangleTwoBlack:TriangleOneBlack:xyBlack:4}).
Note that since $\{w,x\}$, $\{w,y\} \in R_{k-1}$, it cannot be that the first two or the last two cases occur simultaneously.

Case~\ObsCase{\ref{obs:item:TriangleTwoBlack}} to  Case~\ObsCase{\ref{obs:item:TriangleZeroBlack}}:
Suppose there are $x,y \in V_{k-1}$ with $a \in \SetOfVertex{w}$, $b \in \SetOfVertex{x}$, $c \in \SetOfVertex{y}$ and $\{w,x\}, \{w,y\}, \{x,y\}\in R_{k-1}$. Since $\{a,b,c\}$ is of Case~\ObsCase{\ref{obs:item:TriangleTwoBlack}} in $G_k$, it either holds that $\{u,x\}, \{u,y\} \in E_k$ or $\{v,x\}, \{v,y\} \in E_k$. The former is counted in the procedure \textsc{TriCountTwoNeighbors}, Line~\ref{alg:TriangleTwoBlack:TriangleZeroBlack:1} and the latter in Line~\ref{alg:TriangleTwoBlack:TriangleZeroBlack:2}.
    
Case~\ObsCase{\ref{obs:item:TriangleTwoBlack}} to Case~\ObsCase{\ref{obs:item:EdgeRed}} and Case~\ObsCase{\ref{obs:item:EdgeBlack}} to Case~\ObsCase{\ref{obs:item:EdgeRed}}:
Let $\{a,b,c\}$ be a triangle of Case~\ObsCase{\ref{obs:item:EdgeRed}} in $G_{k-1}$, i.e., there exists $x \in V_{k-1}$ with $\{w,x\} \in R_{k-1}$ and either $a \in \SetOfVertex{w}$ and $b,c \in \SetOfVertex{x}$ or $a,b \in \SetOfVertex{w}$ and $c \in \SetOfVertex{x}$. If $\{a,b,c\}$ is of Case~\ObsCase{\ref{obs:item:TriangleTwoBlack}} in $G_{k-1}$, it holds that $\{u,v\} \in E_k$ (otherwise the edge $\{w,x\}$ would not be in $R_{k-1}$) and that either $\{u,x\} \in R_k$ and $\{v,x\} \in E_k$ (counted in the procedure \textsc{TriCountOneNeighbor}, Line~\ref{alg:TriangleTwoBlack:EdgeRed:1}), or $\{v,x\} \in R_k$ and $\{u,x\} \in E_k$  (Line~\ref{alg:TriangleTwoBlack:EdgeRed:2}).
If $\{a,b,c\}$ is of Case~\ObsCase{\ref{obs:item:EdgeBlack}} in $G_k$, the black edge is either incident to $u$ (counted in the procedure \textsc{TriCountOneNeighbor}, Line~\ref{alg:EdgeBlack:EdgeRed:u}) or to $v$ (Line~\ref{alg:EdgeBlack:EdgeRed:v}).

Case~\ObsCase{\ref{obs:item:EdgeBlack}} to Case~\ObsCase{\ref{obs:item:Vertex}}:
Finally, if a triangle $\{a,b,c\}$ is of Case~\ObsCase{\ref{obs:item:EdgeBlack}} in $G_k$ and of Case~\ObsCase{\ref{obs:item:Vertex}} in $G_{k-1}$ it now holds that $a,b,c \in \SetOfVertex{w}$ and such a triangle is counted in Algorithm~\ref{alg:BaseAlgorithm}, Line~\ref{alg:BlackEdge:Vertex}.

Thus, the number of all the triangles of $G$ that are of Case~\ObsCase{\ref{obs:item:TriangleOneBlack}}, \ObsCase{\ref{obs:item:TriangleZeroBlack}}, \ObsCase{\ref{obs:item:EdgeRed}}, and  \ObsCase{\ref{obs:item:Vertex}} in $G_{k-1}$ is indeed computed and stored in the variable $t$ after the iteration $k$.
Since the algorithm only increases $t$ whenever a triangle transitions from an unmarked case to a marked case after contraction, the value $t$ is exactly the desired value.

We store the graph in sorted adjacency lists, which can be  initially realized in time $\Oh(n+m)$ using a linear-time sorting algorithm to sort the vertices $v_1, \ldots, v_n$. To contract the two vertices $u$ and $v$ in each iteration, we can scan the sorted adjacency lists of $u$ and $v$ to identify the red neighborhood and black neighborhood of $w$. Since $w$ has at most $d$ incident red edges and, for each black neighbor, we decrease the number of total edges by one, the total running time, for every call of the procedure $\textsc{UpdateGraph}$, sums up to $\Oh(dn+m)$. Since the auxiliary values only change for $w$ and for the incident red edges of $w$, they can be updated in time $\Oh(d)$ per iteration. Eventually, it takes $\Oh(dn)$ for every call of the procedure \textsc{UpdateAuxiliaryValues}. Finally, the procedures $\textsc{TriCountOneNeighbor}$ and $\textsc{TriCountTwoNeighbors}$ are called at most $d$ times per iteration, taking respectively $\Oh(1)$ and\ $\Oh(d)$ time. Thus, the overall running time of Algorithm~\ref{alg:BaseAlgorithm} is $\Oh(d^2n + m)$.
\end{proof}

\section{Conclusion}

We have obtained an efficient parameterized algorithm for \TC parameterized by the twin-width $\tww$ of the input graph. As a matter of fact, the algorithm is adaptive as it runs in time $\Oh(\tww^2n+m)$ whereas the best unparameterized combinatorial algorithms run in time $\mathcal{O}(n^3)$. Our algorithm is based on dynamic programming and stores a few values that need to be updated at each contraction step.

Some future directions would be to extend this approach and design efficient algorithms to solve other tractable problems when parameterized by the twinwidth of the input graph. It would, furthermore, be highly interesting to find the most general parameter for which an adaptive algorithm for \TC exists. Finally, our algorithm is adaptive when compared to the best combinatorial algorithms, however, there exists a non-combinatorial algorithm that runs in time $\Oh(n^{\omega})$ where $\omega < 2.372863$ \cite{AlonYZ97}. An improvement would then be to obtain an $\Oh(\tww^{\omega-1}n + m)$-time algorithm.

\bibliography{main}

\end{document}